\newtheorem{theorem}{Theorem}[section]
\newtheorem{definition}[theorem]{Definition}
\newtheorem{proposition}[theorem]{Proposition}
\title{Survey of Results on the ModPath and ModCycle Problems}
\author{Antoine Amarilli}
\date{}
\newcommand{\modpath}{\mathsf{ModPath}}
\newcommand{\modcycle}{\mathsf{ModCycle}}
\begin{document}
\maketitle

\begin{abstract}
  This note summarizes the state of what is known about the tractability of the
  problem ModPath, which asks if an input undirected graph contains a simple
  st-path whose length satisfies modulo
  constraints. We also consider the problem ModCycle, which asks for the
  existence of a simple cycle subject to such constraints. We also discuss the status of these problems on directed
  graphs, and on restricted classes of graphs. We explain connections to the problem variant asking
  for a constant vertex-disjoint number of such paths or cycles, and discuss
  links to other related work.
\end{abstract}

\section{Definition}

A \emph{simple path} in an undirected graph $G$ is a path that does not visit
the same vertex twice. (Note that we do not consider \emph{trails}, which can
reuse the same vertex twice but must not use the same edge twice.)

We study the following problem. It was 
posed in~\cite{martens2022complexity}, though related questions have been
studied much earlier (e.g., \cite{thomassen1983graph,arkin1991modularity}):

\begin{definition}
  Fix integers $p$ and $q>0$.
  Given an undirected graph $G$ and two vertices $s$ and $t$, we want to know if
  there exists a simple path connecting $s$ and $t$ in $G$ whose length is
  $p \bmod q$. We call this the $\modpath_{p,q}$ problem.
\end{definition}

A related problem is:

\begin{definition}
  Fix integers $p$ and $q>0$. Given an undirected graph $G$, we want to know if
  there is a simple cycle in $G$ whose length is $p \bmod q$. We call this the
  $\modcycle_{p,q}$ problem.
\end{definition}

The ModPath and ModCycle problems are clearly in NP, as we can easily check in polynomial time
that a path or cycle is suitable.
Note that, if we did not require the paths and cycles to be simple, then the
problems would be solvable in PTIME by a product construction: create $q$ copies
of the graph, create $q$ copies of each edge each of which is going from one
copy to the next, and solve a reachability problem involving vertices in the
zero-th and the $p$-th copy.

The question that we study is whether the ModPath and ModCycle problem are in
PTIME with the definition above which
requires simple paths and simple cycles. We study these problems and simple
variants of them in the rest of this note.

\section{Easy reductions}

\paragraph*{Between cycles and paths.}
It is easy to see that the $\modcycle$ problem reduces to the $\modpath$
problem:

\begin{proposition}
  \label{prp:cycle2path}
  For any integers $p$ and $q>0$, the problem $\modcycle_{p,q}$
  reduces in PTIME (with a Turing reduction) to the
  problem $\modpath_{p-1,q}$.
\end{proposition}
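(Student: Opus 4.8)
The plan is to exploit the elementary correspondence between simple cycles passing through a fixed edge and simple paths between that edge's two endpoints. Fix an edge $e=\{u,v\}$ of $G$ and write $G-e$ for the graph obtained by deleting $e$ but keeping all vertices. A simple cycle of $G$ that uses $e$ is precisely what one obtains by taking a simple $uv$-path in $G-e$ and closing it up with $e$: the path contributes some length $k$, the edge contributes $1$, so the resulting cycle has length $k+1$. This single arithmetic fact is what forces the target modulus to be shifted from $p$ down to $p-1$.

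Concretely, I would build the Turing reduction as follows. For each edge $e=\{u,v\}$ of $G$, I form the instance $(G-e,u,v)$ and query the $\modpath_{p-1,q}$ oracle on it. The reduction answers ``yes'' for $\modcycle_{p,q}$ exactly when at least one of these $|E(G)|$ queries answers ``yes''. Each queried graph is no larger than $G$ and there are only polynomially many of them, so the reduction runs in polynomial time and makes a polynomial number of oracle calls; taking the disjunction of the answers makes this a (disjunctive) Turing reduction, as the statement requires.

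For correctness I would argue both directions. If some query on $(G-e,u,v)$ returns ``yes'', let $P$ be a witnessing simple $uv$-path in $G-e$ of length $\equiv p-1 \pmod q$; since $P$ avoids $e$ and visits distinct vertices, adjoining $e$ closes $P$ into a simple cycle of $G$ of length $\equiv p \pmod q$. Conversely, given a simple cycle $C$ in $G$ with $|C|\equiv p \pmod q$, I pick any edge $e=\{u,v\}$ lying on $C$; then $C-e$ is a simple $uv$-path that does not use $e$, hence lives in $G-e$ and has length $\equiv p-1 \pmod q$, so the query for that particular choice of $e$ returns ``yes''.

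The only genuine obstacle is handling the degenerate cases so that paths and cycles match up exactly, and deleting $e$ before querying is what takes care of this. Removing $e$ forbids the trivial length-$1$ ``path'' consisting of $e$ itself, and since $G$ is a simple graph there is no parallel edge, so every witnessing $uv$-path in $G-e$ has length at least $2$ and therefore closes into a bona fide simple cycle of length at least $3$ rather than a repeated edge or a degenerate short cycle. As $u\neq v$ no further degeneracies arise, and the correspondence is tight.
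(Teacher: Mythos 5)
Your proposal is correct and follows exactly the paper's approach: for each edge $e$, delete it, query the $\modpath_{p-1,q}$ oracle with the endpoints of~$e$ as source and sink, and take the disjunction of the answers, using the bijection between cycles through~$e$ and $st$-paths in the modified graph. Your write-up is simply a more detailed version of the same argument, with the degenerate cases spelled out explicitly.
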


\begin{proof}
  We show how to reduce the problem, for each edge $e$ of the graph, of
  determining whether there is a cycle satisfying the length constraint and
  using the edge~$e$. This clearly suffices, as we can then simply test each
  possible choice of an edge~$e$. For the reduction,
  we modify the graph $G$ to a graph $G_e$ where the edge $e$ is
  removed, and where the source and sink to the $\modpath$ problem 
  are the endpoints of the edge~$e$. Clearly
  there is a bijection between cycles in~$G$ using the edge~$e$ and
  $st$-paths in~$G_e$, and this bijection maps cycles of length $p \bmod q$ to
  paths of length $p-1 \bmod q$.
\end{proof}

We can show a similar reduction in the other direction, but the modulo is
changed:

\begin{proposition}
  \label{prp:path2cycle}
  For any integers $p$ and $q>0$, the problem $\modpath_{p,q}$ reduces in PTIME
  (with a Karp reduction) to the problem $\modcycle_{2p+1, 2q}$.
\end{proposition}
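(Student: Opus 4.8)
The plan is to build a graph $G'$ from the instance $(G,s,t)$ so that the simple cycles of the target length correspond exactly to the simple $st$-paths of the desired length in $G$, while every ``spurious'' cycle is ruled out by a parity argument. The doubling of the modulus from $q$ to $2q$ is precisely what creates room for this parity separation.

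Concretely, I would first subdivide every edge of $G$ once, obtaining a graph $H$: each original edge $uv$ becomes a path $u - m_{uv} - v$ through a fresh degree-two vertex $m_{uv}$. This has two effects. Since each subdivision vertex has degree two, every simple $st$-path in $H$ is forced to traverse whole original edges, so these paths are in bijection with the simple $st$-paths of $G$, with lengths exactly doubled: a path of length $\ell$ in $G$ corresponds to one of length $2\ell$ in $H$. Moreover $H$ is bipartite (original vertices versus subdivision vertices), so every cycle of $H$ has even length. I would then add to $H$ a single fresh edge $e_0$ directly joining $s$ and $t$, and let $G'$ be the result. This is the whole construction, and it is clearly computable in polynomial time.

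Next I would analyze the simple cycles of $G'$ according to whether they use $e_0$. A cycle avoiding $e_0$ lies entirely in $H$ and hence has even length; since $2p+1$ is odd and $2q$ is even, the residue $2p+1 \bmod 2q$ is odd, so no such cycle can meet the target. A cycle using $e_0$ consists of $e_0$ together with a simple $st$-path in $H$, hence has length $2\ell+1$ for a simple $st$-path of length $\ell$ in $G$. The final step is the modular bookkeeping: $\ell \equiv p \pmod q$ holds exactly when $2\ell+1 \equiv 2p+1 \pmod{2q}$, since multiplying the congruence by $2$ and adding $1$ is reversible modulo $2q$. Combining these observations yields the equivalence: $G'$ has a simple cycle of length $2p+1 \bmod 2q$ iff $G$ has a simple $st$-path of length $p \bmod q$.

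The only real subtlety, and the step I would be most careful about, is ensuring that no cycle other than those coming from $st$-paths can satisfy the constraint; this is exactly what the parity argument guarantees, and it is the reason the construction needs both the edge subdivision (to force spurious cycles to be even and to double path lengths) and the enlarged modulus $2q$ (to leave an odd residue $2p+1$ that only the intended cycles can hit). I would also confirm the bijection between simple $st$-paths in $H$ and in $G$ via the degree-two subdivision vertices, and note that the degenerate interaction with an original edge $st$ (now a length-two path parallel to $e_0$) causes no trouble, since the resulting length-three cycle correctly corresponds to the single-edge $st$-path of length one.
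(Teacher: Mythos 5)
Your proof is correct and takes essentially the same approach as the paper: subdivide edges to double path lengths and force all cycles avoiding the new $st$-edge to be even, then use the parity of $2p+1$ modulo $2q$ to conclude. Your write-up is in fact slightly more careful than the paper's (which says ``subdivide each edge twice'' where the arithmetic and bipartiteness argument require each edge to become a path of exactly two edges, as in your construction), and your explicit treatment of the bijection through degree-two vertices and the pre-existing $st$-edge case is a welcome addition.
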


\begin{proof}
  Given the input undirected graph $G$ to $\modpath_{p,q}$ with source and sink
  $s$ and $t$, subdivide each edge twice, and add a single edge connecting $s$
  and $t$. We let $G'$ be the result. Given a path of length $p \bmod q$
  connecting $s$ and $t$ in~$G$, we deduce a path of length $2p \bmod 2q$
  connecting $s$ and $t$ in~$G'$, hence a cycle of length $2p+1 \bmod 2q$ in~$G'$
  thanks to the extra edge. Conversely, we see that all cycles in $G'$ have
  even length in~$G$ except if they use the additional edge between $s$ and $t$,
  in which case they give us a simple path between $s$ and $t$. If the cycle has
  length $2p+1 \bmod 2q$ with the extra edge, then the simple path in question
  has length $2p \bmod 2q$ in~$G'$, hence length $p \bmod q$ in~$G$.
\end{proof}

We are not aware of a reduction from $\modpath$ to $\modcycle$ which preserves
the modulo.

\paragraph*{On remainders.}
It is also clear that, for the $\modpath$ problem, the value of the reminder
does not matter:

\begin{proposition}
  \label{prp:remchange}
  For any integers $p$ and $p'$ and $q>0$, the problem $\modpath_{p,q}$
  reduces in PTIME (with a Karp reduction) to the
  problem $\modpath_{p',q}$.
\end{proposition}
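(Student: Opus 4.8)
The plan is to shift the target remainder by hanging a short path off the sink, so that every simple $st$-path in the original graph is lengthened by a fixed amount in the new graph. First I would choose the length of this appended path: let $\ell$ be the unique integer in $\{0, 1, \dots, q-1\}$ with $\ell \equiv p' - p \pmod q$. If $\ell = 0$ the identity map is already a valid reduction, so I may assume $\ell > 0$.

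Given the input instance $(G, s, t)$ of $\modpath_{p,q}$, I would construct $G'$ by adding $\ell$ fresh vertices $v_1, \dots, v_\ell$ together with the edges $t v_1, v_1 v_2, \dots, v_{\ell-1} v_\ell$, forming a path of length $\ell$ dangling off $t$. I keep $s$ as the source and set the new sink to be $t' := v_\ell$. This is clearly computable in polynomial time, so the reduction is a Karp reduction provided correctness holds.

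The key step is to establish a bijection between simple $st$-paths in $G$ and simple $st'$-paths in $G'$ under which lengths increase by exactly $\ell$. The point is that the added vertices are fresh and that $t' = v_\ell$ has degree $1$ in $G'$, while each intermediate $v_i$ has degree $2$: hence any simple $st'$-path can reach $t'$ only through $v_{\ell-1}$, and therefore only by traversing the entire dangling path $t - v_1 - \cdots - v_\ell$ as its suffix. Deleting this suffix yields a simple $st$-path in $G$, and conversely appending it to any simple $st$-path in $G$ produces a simple $st'$-path in $G'$; the two paths differ in length by precisely $\ell$. Consequently, $G$ admits a simple $st$-path of length $p \bmod q$ if and only if $G'$ admits a simple $st'$-path of length $p + \ell \equiv p' \pmod q$, which gives the desired equivalence of instances.

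The only delicate point is the structural claim that the dangling path is \emph{forced} as the suffix of every simple $st'$-path, but this is immediate from $t'$ being a degree-$1$ endpoint reachable solely through the chain of degree-$2$ vertices $v_1, \dots, v_{\ell-1}$; there is no other way into $t'$, so no detour is possible. I expect no genuine obstacle here beyond carefully checking this forcing argument and the boundary case $\ell = 0$.
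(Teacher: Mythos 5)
Your proof is correct and is exactly the paper's approach: append a pendant path of length $\ell \equiv p' - p \pmod q$ at $t$ and take its far endpoint as the new sink. The paper states this in one line; your degree-based forcing argument and handling of $\ell = 0$ just make explicit the details the paper leaves implicit.
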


\begin{proof}
  Simply add a path of the suitable length connecting $t$ to a new vertex~$t'$,
  and reduce to $\modpath_{p',q}$ with source $s$ and target $t'$.
\end{proof}

Our results in Section~\ref{sec:cyclezero} will imply that the same is not true
of $\modcycle$ (assuming that P is different from NP).

\paragraph*{On moduli.}
It is also obvious that, for paths, the problem is at least as hard when taking
a multiple of the original modulo.

\begin{proposition}
  For any integers $p$ and $q>0$ and $k > 0$, the problem $\modpath_{p,q}$
  reduces in PTIME (with a Turing reduction) to the
  problem $\modpath_{p,kq}$.
\end{proposition}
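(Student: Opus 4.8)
The plan is to exploit the fact that a single residue class modulo $q$ splits into exactly $k$ residue classes modulo $kq$. Concretely, for any integer $\ell$, the congruence $\ell \equiv p \pmod{q}$ holds if and only if $\ell \equiv p + jq \pmod{kq}$ for exactly one value $j \in \{0, 1, \ldots, k-1\}$. This is the only number-theoretic observation needed, and it is immediate: the residues in $\{0, \ldots, kq-1\}$ that reduce to $p \bmod q$ modulo $q$ are precisely $(p \bmod q),\ (p \bmod q) + q,\ \ldots,\ (p \bmod q) + (k-1)q$, of which there are exactly $k$.

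Given this, I would describe the Turing reduction as follows. On input $(G, s, t)$ for $\modpath_{p,q}$, a simple $st$-path of length $p \bmod q$ exists in $G$ if and only if, for some $j \in \{0, \ldots, k-1\}$, there exists a simple $st$-path of length $(p + jq) \bmod kq$ in the same graph $G$. The algorithm therefore loops over the $k$ values of $j$ and, for each, tests whether $G$ contains a simple $st$-path of length $(p+jq) \bmod kq$, returning yes if and only if at least one test succeeds.

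Each such test is an instance of $\modpath_{p+jq,\, kq}$ rather than of $\modpath_{p,kq}$, so a final step is needed to match the target problem in the statement. Here I would invoke Proposition~\ref{prp:remchange}: for every $j$, the problem $\modpath_{p+jq,\,kq}$ reduces in PTIME (by a Karp reduction) to $\modpath_{p,kq}$, since that proposition lets us change the remainder freely while keeping the modulus $kq$ fixed. Composing, the whole procedure makes $k$ oracle calls to $\modpath_{p,kq}$, and since $k$ is a fixed constant this is a polynomial-time Turing reduction.

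There is essentially no serious obstacle: correctness follows from the residue-splitting identity together with the definitional equivalence between the original membership question and the disjunction over the $k$ lifted residues. The only point requiring care is the bookkeeping of the remainder change, which is exactly what Proposition~\ref{prp:remchange} supplies; without it one would obtain a Turing reduction to the whole family $\{\modpath_{p',kq}\}_{p'}$ rather than to the single fixed target problem $\modpath_{p,kq}$.
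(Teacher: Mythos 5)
Your proposal is correct and follows essentially the same route as the paper: the paper's proof likewise observes that a path of length $p \bmod q$ exists iff a path of length $p + k'q \bmod kq$ exists for some $k'$, and then combines the oracle for $\modpath_{p,kq}$ with Proposition~\ref{prp:remchange} to handle each of the $k$ lifted residues. Your write-up simply spells out the bookkeeping (the $k$ oracle calls and the direction of the remainder-change reduction) that the paper leaves terse.
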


\begin{proof}
  There is a path connecting $s$ and $t$ with length $p \bmod q$ iff there is a
  path of length $p+k'q \bmod kq$ for some~$k$, so we can conclude using the
  oracle and using Proposition~\ref{prp:remchange}.
\end{proof}

A similar reduction works from $\modcycle_{p,q}$ if we assume an oracle for the problems
$\modcycle_{p',kq}$  with $p' = p + iq$ for every $i$.

\section{On directed graphs}

In this section, we discuss the status of the problems $\modpath$ and
$\modcycle$ when studying them on directed graphs instead of undirected graphs.
We exclude the trivial case of the modulo $q=1$  as the problems then amount to
reachability or to testing the existence of a directed cycle which are clearly
solvable in polynomial time.

\paragraph*{Paths.}
The analogue of the $\modpath$ problem on directed graphs
is NP-hard.
Indeed, the following is known (with an elementary but non-trivial
proof):

\begin{proposition}[\cite{fortune1980directed}]
\label{prp:disjoint}
  The problem, given a directed graph $G$ and vertices $s, t, s', t'$, of
  deciding if there is a path from $s$ to $t$ and a path from $s'$ to $t'$ that
  are vertex-disjoint, is NP-hard.
\end{proposition}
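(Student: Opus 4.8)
The plan is to prove NP-hardness by a polynomial-time reduction from 3-SAT. Given a 3-CNF formula $\phi$ with variables $x_1, \dots, x_n$ and clauses $C_1, \dots, C_m$, I would build a directed graph $G$ together with terminals $s, t, s', t'$ so that $G$ admits vertex-disjoint paths $s \to t$ and $s' \to t'$ exactly when $\phi$ is satisfiable. The guiding idea is to let the $s \to t$ path encode a truth assignment and the $s' \to t'$ path certify that every clause is satisfied, with vertex-disjointness enforcing the consistency between the two.

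First I would set up the \emph{assignment lane}. For each variable $x_i$ I create an entry node $a_i$ and an exit node $b_i$, chained as $s \to a_1$, $b_i = a_{i+1}$, and $b_n \to t$. Between $a_i$ and $b_i$ there are two internal routes, a \emph{true} route and a \emph{false} route; the $s \to t$ path must pick exactly one, thereby choosing the value of $x_i$. For each occurrence of a literal $\ell$ in a clause $C_j$ I introduce a single vertex $v_{\ell,j}$. I would route the true branch of $x_i$ through all vertices $v_{\neg x_i, j}$ and the false branch through all vertices $v_{x_i, j}$: setting $x_i$ true therefore \emph{blocks} precisely the occurrences of the falsified literal $\neg x_i$, and symmetrically for false.

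Next I would set up the \emph{verification lane}. For each clause $C_j$ with literals $\ell_1, \ell_2, \ell_3$ I create an entry $c_j$ and an exit $d_j$, chained as $s' \to c_1$, $d_j = c_{j+1}$, $d_m \to t'$, with three parallel arcs from $c_j$ to $d_j$ passing respectively through $v_{\ell_1, j}$, $v_{\ell_2, j}$, $v_{\ell_3, j}$. Thus the $s' \to t'$ path must, for each clause, traverse the vertex of some literal of that clause, and it can do so only if that vertex was not consumed by the assignment lane, i.e. only if the chosen literal is true. Soundness and completeness then follow: a satisfying assignment yields disjoint paths by routing the verification lane through one true literal per clause (these vertices being distinct across clauses and unblocked), and conversely any pair of disjoint paths forces each clause to contain an unblocked, hence true, literal.

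The main obstacle I anticipate is ruling out \emph{cheating} solutions, since each shared vertex $v_{\ell,j}$ has in- and out-neighbours in both lanes, and one must ensure the two paths cannot swap roles or leak from one lane into the other. I would resolve this by orienting all edges so that $t$ is reachable only through the variable chain and $t'$ only through the clause chain: a path that switched lanes at some $v_{\ell,j}$ would head toward the wrong sink and could never reach its designated endpoint, so every valid path stays in its lane and the blocking correspondence is exactly as intended. Finally I would observe that the construction has size polynomial in $n + m$, completing the reduction; since the problem is clearly in NP, this in fact establishes NP-completeness.
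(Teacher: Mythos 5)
The paper gives no proof of this proposition at all: it cites Fortune, Hopcroft and Wyllie \cite{fortune1980directed}, noting that their proof is ``elementary but non-trivial.'' Your plan follows the same natural template their proof starts from (an assignment lane, a verification lane, disjointness as the communication mechanism), but the step you yourself flag as the ``main obstacle'' is exactly the non-trivial part, and your resolution of it is incorrect. You claim that a path switching lanes at a shared vertex $v_{\ell,j}$ ``would head toward the wrong sink and could never reach its designated endpoint.'' This is false in your construction: every shared vertex $v_{\ell,j}$ has an in-edge and an out-edge in \emph{both} lanes, so a path can switch out of its lane and later switch back. In particular, the verification path can enter $v_{\ell,j}$ from $c_j$, ride the assignment-lane edges along the \emph{unused} branch of $\ell$'s variable to a later occurrence $v_{\ell,j'}$ of the same literal, and exit to $d_{j'}$ --- reaching $t'$ while skipping the verification of every clause strictly between $j$ and $j'$.

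This is not a presentational gap; it makes the reduction unsound. Take $\phi=(x)\wedge(y)\wedge(\neg y)\wedge(x)$, which is unsatisfiable because of its two middle clauses (your construction does not depend on clause width, and the standard unsatisfiability-preserving padding to width $3$ leaves the first and last clauses containing $x$, so nothing below changes). Since $\neg x$ never occurs, the true branch of $x$ is just an edge $a_x\to b_x$. Now the $s\to t$ path $s,\,a_x,\,b_x{=}a_y,\,v_{\neg y,3},\,b_y,\,t$ and the $s'\to t'$ path $s',\,c_1,\,v_{x,1},\,v_{x,4},\,d_4,\,t'$ (whose middle edge is the false-branch edge of $x$ joining its two consecutive occurrences) are vertex-disjoint, so your reduction answers yes on an unsatisfiable formula; the same cheat is available whenever the first and last clauses share a literal. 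Repairing this is precisely where the real work of \cite{fortune1980directed} lies: one must design a gadget (their ``switch'') whose internal structure makes it \emph{impossible} for a path to enter in one lane and leave in the other, and then verify that property. A single shared blocking vertex per literal occurrence cannot enforce it, and no re-orientation of edges can, since the in/out edges of both lanes must be present at the shared vertex for the intended (completeness) routings to exist.
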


Proposition~\ref{prp:disjoint} implies 
that $\modpath$ is hard on directed graphs:

\begin{proposition}
\label{prp:directed}
  Fix any $p$ and $q \geq 2$. The problem, given a directed graph $G$ and
  vertices $s$ and $t$, of testing if there is a simple path of length $p$ mod
  $q$ from $s$ to $t$, is NP-hard.
\end{proposition}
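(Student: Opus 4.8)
The plan is to reduce from the vertex-disjoint paths problem of Proposition~\ref{prp:disjoint}. I'm given a directed graph $G$ with vertices $s, t, s', t'$, and I want to decide whether there exist vertex-disjoint paths from $s$ to $t$ and from $s'$ to $t'$. The goal is to build a single directed graph $H$ together with two vertices in which a simple $p \bmod q$ path exists iff the two disjoint paths exist in $G$.

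Key technique: modulus-based length control.

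The reduction: I want to string the two paths together into a single path. First I would connect $t$ to $s'$ by an edge (or a short connector), so that any simple path from $s$ to $t'$ that happens to pass through $t$ and then $s'$ is exactly a concatenation of an $s$-$t$ path and an $s'$-$t'$ path. Since the whole path is simple, these two sub-paths are automatically vertex-disjoint—that's the crucial point that the simplicity constraint buys us for free. So the existence of a simple $s$-$t'$ path in $H$ routed through the connector corresponds exactly to a pair of vertex-disjoint $s$-$t$ and $s'$-$t'$ paths in $G$.

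The remaining issue is the length constraint. A concatenated path can have essentially arbitrary length, so a bare reachability reduction would not control the length modulo $q$. I would use a padding gadget to force the length into the desired residue class $p \bmod q$. A clean way is to subdivide or replicate so that travel within the original $G$-edges contributes a controlled amount mod $q$, or to append a gadget near $t'$ whose traversal lets the path ``choose'' any residue; concretely, attach to $t'$ a small directed gadget offering alternate-length detours so that for whatever length the $G$-portion has, the path can adjust to hit $p \bmod q$. Because $q \geq 2$ we have room to do this, and because we may freely change the remainder by Proposition~\ref{prp:remchange}, it suffices to realize one convenient target residue.

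The main obstacle I anticipate is making the length gadget not interfere with simplicity and not create ``shortcut'' paths that bypass the intended $t$-$s'$ connector. I would need to argue that every simple path from $s$ to $t'$ in $H$ meeting the length constraint must genuinely traverse $G$'s first component, cross the connector into the second component, and reach $t'$—and that the adjustment gadget is only reachable at the very end so it cannot be used to circumvent the structure. Ensuring the gadget's vertices are disjoint from $G$'s vertices, and that its only entry point is at $t'$ and exit at the final target, should close this gap; the verification that no unintended simple path of the right length exists is the delicate part.
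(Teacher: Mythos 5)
Your high-level skeleton matches the paper's: reduce from Proposition~\ref{prp:disjoint}, join $t$ to $s'$ by a connector so that a single simple $s$--$t'$ path through the connector decomposes into two vertex-disjoint paths (simplicity giving disjointness for free). But there is a genuine gap at exactly the point you flag as ``the delicate part,'' and one of your two proposed fixes would make the reduction fail outright. The instance $G$ of the disjoint-paths problem is one arbitrary directed graph, not two components: $t'$ may well be reachable from $s$ by a simple path that never touches $t$ or $s'$. No placement of gadgets ``only reachable at the very end'' can exclude such bypass paths, because they live entirely inside $G$. The \emph{only} thing available to force the connector to be used is the residue arithmetic itself, and this is what the paper's construction is engineered to do: subdivide every edge of $G$ into a path of exactly $q$ edges, so that every portion of the path inside (subdivided) $G$ contributes $0 \bmod q$, and make the connector a path of exactly $p$ edges (for $p>0$). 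Then a simple path avoiding the connector has length $\equiv 0$, a simple path using it has length $\equiv p$, and since $p \not\equiv 0$ the length constraint forces connector usage. You mention subdivision as one option but never commit to it nor derive this forcing argument, which is the entire content of the proof.

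Your alternative mechanism --- a gadget near $t'$ ``offering alternate-length detours so that for whatever length the $G$-portion has, the path can adjust to hit $p \bmod q$'' --- is not merely risky; it is self-defeating. If every simple $s$--$t'$ path can adjust its residue at will, the length constraint becomes vacuous, and your reduction answers yes precisely when $t'$ is reachable from $s$, which is a polynomial-time question and not the disjoint-paths problem. The residue constraint is the tool that rules out bypass paths; a gadget whose purpose is to let any path satisfy the constraint destroys that tool. Finally, note that $p=0$ needs separate treatment even with subdivision (a connector of length $\equiv 0 \bmod q$ is indistinguishable from no connector): the paper handles it with a fresh source attached to $s$ by one edge and a connector of $q-1$ edges. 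Your appeal to Proposition~\ref{prp:remchange} could substitute for this, but only after observing that the remainder-shifting reduction (appending a directed path after $t'$) is valid on directed graphs, which you do not argue.
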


\begin{proof}
  We reduce from Proposition~\ref{prp:disjoint}.
  We first assume $p>0$. Then, given a directed graph $G$ with vertices
  $s,t,s',t'$, we replace each edge by a path of $q$ edges, choose $s$ as the
  source and $t'$ as the sink, and add a path of $p$ edges from $t$ to~$s'$.
  Then it is clear that any path of length $p \bmod q$ from~$s$ to~$t$ must use
  the path of $p$ edges, and thus give a solution to the problem of
  Proposition~\ref{prp:disjoint}. Conversely, any solution to the latter problem
  gives a path for the former problem.

  If $p=0$, we subdivide $G$ as indicated, we choose as source a fresh vertex
  with a 1-edge path to~$s$, choose $t'$ as sink, and add a path of $q-1$ edges
  from~$t$ to~$s'$. The reasoning is similar.
\end{proof}

\paragraph*{Cycles.}
Intractability also holds for the problem $\modcycle$ on directed graphs in the
case where $p\neq 0$ and $q$ is large
enough, as was already observed in~\cite{arkin1991modularity}:

\begin{proposition}
  Fix any $0 < p < q$ and $q \geq 3$. The problem, given a directed graph $G$, of
  testing if there is a simple cycle of length $p \bmod q$, is NP-hard.
\end{proposition}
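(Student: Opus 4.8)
The plan is to reduce from the directed two-disjoint-paths problem (Proposition~\ref{prp:disjoint}), in the same spirit as the proof of Proposition~\ref{prp:directed}: I will build a digraph whose only simple cycles of length $\equiv p \bmod q$ correspond to disjoint-path solutions. Given an instance $(G,s,t,s',t')$, I first preprocess $G$ by deleting every in-edge of $s$ and of $s'$ and every out-edge of $t$ and of $t'$. This preserves the answer: in any solution the $s \to t$ path uses no in-edge of $s$ and no out-edge of $t$, and it cannot pass through $s'$ or $t'$ (these lie on the vertex-disjoint $s' \to t'$ path), so it avoids all deleted edges; the argument is symmetric for the $s' \to t'$ path. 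After this step $s,s'$ are sources and $t,t'$ are sinks.

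I then subdivide every remaining edge into a directed path of exactly $q$ edges, obtaining a graph $H$ in which every directed path between original vertices has length $\equiv 0 \bmod q$; in particular every cycle contained in $H$ has length $\equiv 0 \bmod q$. Finally I add two new internally-disjoint directed paths: a path $P_1$ from $t$ to $s'$ of length $a$ and a path $P_2$ from $t'$ to $s$ of length $b$, with $a,b>0$ chosen so that
\[
 a+b \equiv p, \qquad a \not\equiv p, \qquad b \not\equiv p \pmod q.
\]
Such $a,b$ exist precisely because $q \geq 3$: among the $q$ residue classes of solutions to $a+b\equiv p$, only $a\equiv p$ (forcing $b\equiv 0$) and $b\equiv p$ (forcing $a\equiv 0$) are forbidden, so at least one admissible pair remains, and since neither residue is $0$ both lengths can be taken strictly positive. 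Let $G'$ be the resulting digraph, which is clearly computable in polynomial time.

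The core of the argument is the case analysis showing that $G'$ has a simple cycle of length $\equiv p \bmod q$ iff the disjoint paths exist. The preprocessing makes $P_1$ the unique out-edge of $t$ and unique in-edge of $s'$, and $P_2$ the unique out-edge of $t'$ and unique in-edge of $s$; hence a simple cycle visits $t$ or $s'$ only if it traverses all of $P_1$, and visits $t'$ or $s$ only if it traverses all of $P_2$. A cycle using neither special path stays inside $H$ and has length $\equiv 0 \not\equiv p$; a cycle using only $P_1$ has the shape $t \to_{P_1} s' \to_H t$ of length $\equiv a \not\equiv p$; a cycle using only $P_2$ has length $\equiv b \not\equiv p$. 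A cycle using both is forced (since $s'$ is reachable only via $P_1$ and $s$ only via $P_2$) into the shape $s \to_H t \to_{P_1} s' \to_H t' \to_{P_2} s$, of length $\equiv a+b \equiv p$; by simplicity its two $H$-portions are vertex-disjoint and, after undoing the subdivision, yield vertex-disjoint $s \to t$ and $s' \to t'$ paths in $G$, and conversely any disjoint pair produces exactly such a cycle.

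The main obstacle is precisely ruling out the ``single-commodity'' spurious cycles that use only one of the two special paths: a naive gadget closing the disjoint-path instance into a cycle would accept whenever $s'$ can reach $t$, regardless of disjointness. Two ingredients defeat this: the subdivision forces every internal cycle to length $\equiv 0$, and the arithmetic choice of $a,b$ keeps each single-path cycle away from the residue $p$. The latter is exactly where the hypothesis $q \geq 3$ is needed.
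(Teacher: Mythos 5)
Your proof is correct and follows essentially the same route as the paper's: a reduction from the directed two-disjoint-paths problem in which every edge is subdivided into a path of $q$ edges and two extra paths of lengths $a,b$ are added with $a+b\equiv p$, $a\not\equiv p$, $b\not\equiv p \pmod{q}$, followed by the same case analysis on which of the extra paths a candidate cycle uses. The only cosmetic differences are your preprocessing step (deleting in-edges of $s,s'$ and out-edges of $t,t'$), which is harmless but not needed since the modulo bookkeeping already rules out the spurious cycles, and your counting argument for the existence of $a,b$ in place of the paper's explicit choice of values.
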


\begin{proof}
  We first show that there are two values $0 < p_1, p_2 < q$  such that $p
  \neq p_1$, $p \neq p_2$, and $p = p_1
  + p_2 \bmod q$. If $p \geq 2$ then this is clear, taking $p_1 = 1$ and $p_2 =
  p_1 - 1$. If $p = 1$, take $p_1 = 2$ and $p_2 = q-1$, noting that $p_1 < q$
  and $p_2 \neq q$ because $q \geq 3$.

  We reduce from Proposition~\ref{prp:disjoint} like in the first case of Proposition~\ref{prp:directed}: given the
  directed graph $G$, we subdivide each edge to a path of length $q$, then add a
  directed path of length $p_1$ from $t$ to $s'$ and a directed path of length
  $p_2$ from $t'$ to~$s$. Now, from a solution to the problem of
  Proposition~\ref{prp:disjoint}, we deduce a cycle of length $p_1 + p_2 = p \bmod q$.
  Conversely, in~$G$, we can partition the cycles among those that use either
  none of the extra paths, one of the extra paths, or both extra paths. Their
  modulo values are $0$, $p_1$ or $p_2$, and $p$ respectively. Now, as $p_1 \neq
  p$ and $p_2 \neq p$ and $p \neq 0$, this means that a cycle of length $p \bmod q$
  must use both extra paths. This gives us two disjoint paths from $s$ to~$t$
  and from $s'$ to~$t'$ in~$G$, hence in the initial graph.
\end{proof}

A more general complexity classification is given
in~\cite{hemaspaandra2004complexity} for a variant of
the problem where the remainder modulo the value~$q$ is required to fall in a
certain set $S$ of allowed remainders (instead of $S=\{p\}$), provided that the set $S$ of allowed
remainders does not
include~$0$.

For the case where the requested remainder is $p=0$, i.e., the problem
$\modcycle_{0,q}$ with $q \geq 3$ on directed graphs, the complexity appears to
be open: this is stated as open in~\cite{hemaspaandra2004complexity}. In
other words, for any fixed $q \geq 3$, it is open whether we can determine
in PTIME, given a directed graph, whether it contains a simple cycle of length
multiple of~$q$.
Also note that
this same problem with $p=0$ is known to tractable on undirected graphs
(Section~\ref{sec:cyclezero}).

For the case $q=2$, it is known that we can (easily) test in PTIME whether a directed graph
contains an odd cycle~\cite{thomassen1985even}, and (less easily) whether it contains an even
cycle~\cite{robertson1999permanents,mccuaig2004polyas}. Accordingly:

\begin{proposition}[\cite{thomassen1985even,robertson1999permanents,mccuaig2004polyas}]
  \label{prp:cycledirected2}
  For $p\in \{0,1\}$, the problem, given a directed graph $G$, of
  testing if there is a simple cycle of length $p \bmod 2$, is in PTIME.
\end{proposition}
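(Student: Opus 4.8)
The plan is to treat the two cases $p=1$ and $p=0$ separately, as they differ enormously in difficulty, and to note that the proposition is a restatement of known results rather than a fresh argument.

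For $p=1$ (odd directed cycle) I would give a direct polynomial-time procedure. First decompose $G$ into strongly connected components, since every directed cycle is contained in a single one. Within a strongly connected component, I claim there is no odd cycle if and only if the vertices admit a $2$-coloring in which every arc goes between the two color classes; such a coloring forces every directed cycle to alternate colors and hence to have even length. To test this, fix a root $r$, run a BFS, and try to assign to each vertex the parity of the length of a directed path from $r$. A conflict during propagation yields two directed walks from $r$ to some $v$ of opposite parity; concatenating each with a fixed return walk from $v$ to $r$ (which exists by strong connectivity) gives two closed walks of opposite parity, so one of them is odd. Since any closed directed walk is balanced at every vertex and therefore decomposes edge-disjointly into directed cycles whose lengths sum to the walk length, an odd closed walk must contain an odd cycle, witnessing one in the component. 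Conversely, if the propagation succeeds, the resulting $2$-coloring certifies that no odd cycle exists. This runs in polynomial time, matching the elementary tractability recorded in~\cite{thomassen1985even}.

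For $p=0$ (even directed cycle) the situation is far harder, and here I would lean entirely on the cited deep results rather than attempt an elementary proof. The key step is the classical equivalence between deciding the existence of an even directed cycle and deciding whether a given square $0/1$ matrix has permanent zero (Pólya's permanent problem), which is in turn tied to recognizing Pfaffian orientations of bipartite graphs. I would state this reduction and then invoke the structural characterization of bipartite graphs admitting a Pfaffian orientation, obtained via permanents by Robertson, Seymour, and Thomas~\cite{robertson1999permanents} and independently by McCuaig~\cite{mccuaig2004polyas}, which supplies the polynomial-time decision procedure.

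The main obstacle is precisely this even-cycle case. Unlike the odd case, no short combinatorial argument is known: the tractability rests on the intricate graph-minor-style structure theory underlying the resolution of Pólya's permanent problem, a problem that remained open for decades. My write-up would therefore present the even case only as a reduction together with a citation to~\cite{robertson1999permanents,mccuaig2004polyas}, rather than as a self-contained argument.
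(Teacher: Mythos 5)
Your proposal is correct in substance, and it matches how the paper itself handles this proposition: the paper gives no proof at all, justifying the statement purely by the citations (Thomassen for the ``easy'' odd case, Robertson--Seymour--Thomas and McCuaig for the ``less easy'' even case), which is exactly the division of labor you adopt. What you add beyond the paper is a genuine self-contained argument for $p=1$: the reduction to strongly connected components, the parity-propagation $2$-coloring, and the observation that an odd closed directed walk must contain an odd directed cycle because its edge multiset is balanced at every vertex and hence decomposes into directed cycles whose lengths sum to the (odd) walk length. That argument is sound and is the standard elementary proof the paper alludes to when it calls the odd case easy. For $p=0$ you rightly decline to attempt an elementary proof and lean on the cited deep results, again matching the paper. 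One inaccuracy to fix before any write-up: Pólya's permanent problem is \emph{not} deciding ``whether a given square $0/1$ matrix has permanent zero'' --- that is just bipartite perfect-matching existence, which is easy. Pólya's problem asks whether the signs of some entries of the matrix can be changed so that the determinant of the resulting matrix equals the permanent of the original one (equivalently, whether the associated bipartite graph admits a Pfaffian orientation); it is this problem, not permanent-vanishing, that is polynomial-time equivalent to detecting an even directed cycle and that \cite{robertson1999permanents,mccuaig2004polyas} resolve.
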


Note that this contrasts with the intractability of the same task for $\modpath$
with $q=2$ (Proposition~\ref{prp:directed}).

Incidentally, only
very recently was a tractable randomized algorithm shown to compute the
\emph{shortest}
simple even cycle in a directed graph~\cite{bjorklund2022shortest}.
The problem of the shortest simple odd cycle in a directed graph
can easily be seen to be in polynomial time, thanks to the fact
that the shortest odd cycle is necessarily simple~\cite{cstheory}.

\paragraph*{Restricted classes of directed graphs.}
One immediate observation is that all problems on directed graphs discussed in this section are
tractable if the input is assumed to be a \emph{directed acyclic graph}: such
graphs have no cycles, and the
directed paths on such graphs are automatically simple.

\section{Testing if all moduli are the same, and the case of modulo 2}
\label{sec:allsame}

\paragraph*{Paths.}
It is shown in~\cite[Theorem~4]{arkin1991modularity} that, for any fixed
$p$ and~$q$,
one can test in PTIME, given a
graph and vertices $s$ and $t$, whether \emph{all simple paths} connecting $s$
and $t$ in~$G$ have length $p \bmod q$.
This implies in particular the following:

\begin{proposition}
  The problems $\modpath_{0,2}$ and $\modpath_{1,2}$ are in PTIME.
\end{proposition}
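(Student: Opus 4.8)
The plan is to reduce both problems to the PTIME procedure mentioned just above, which decides, for fixed $p$ and $q$, whether \emph{all} simple $st$-paths have length $p \bmod q$. The key observation is that modulo~$2$ there are only two residue classes, so the existence of a simple $st$-path in one class is essentially the negation of all such paths lying in the other class.

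First I would dispose of the degenerate case with a reachability check: test in PTIME (for instance by breadth-first search) whether $t$ is reachable from $s$. If it is not, then there is no simple $st$-path at all, and so both $\modpath_{0,2}$ and $\modpath_{1,2}$ answer ``no''. This step is necessary because the ``all paths'' condition is vacuously true when no path exists, and I must avoid confusing ``no path'' with ``no path of the required parity''.

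Assuming $t$ is reachable, at least one simple $st$-path exists. To decide $\modpath_{0,2}$, I would run the cited procedure with $p=1$ and $q=2$ to test whether all simple $st$-paths have length $1 \bmod 2$. Since some path exists, all paths are odd if and only if no even path exists; hence $\modpath_{0,2}$ answers ``yes'' precisely when this test fails. Symmetrically, for $\modpath_{1,2}$ I would run the procedure with $p=0$ and $q=2$, answering ``yes'' exactly when not all paths are even. Both the reachability test and the cited procedure run in polynomial time, so the overall algorithm is PTIME.

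I do not expect a genuine obstacle here: the only point requiring care is the vacuous-truth issue above, which is precisely why the reachability precheck comes first. I would also emphasize that the argument is specific to $q=2$: the equivalence between ``there exists a path of residue $0$'' and ``not all paths have residue $1$'' relies on there being exactly two residue classes, so it does not extend to deciding $\modpath_{p,q}$ for larger $q$ through the same ``all paths'' subroutine.
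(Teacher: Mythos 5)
Your proof is correct and takes essentially the same route as the paper: the paper derives this proposition directly from the cited PTIME test of~\cite{arkin1991modularity} for whether \emph{all} simple $st$-paths have a given length modulo~$2$, which is exactly your complementation argument between the two residue classes. Your explicit reachability precheck is a sensible way to handle the vacuous case that the paper leaves implicit.
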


The same result can be shown with a simpler proof due to Edmonds,
see~\cite[Section II]{lapaugh1984even} for the case $p=0$ and $q=2$, which
implies the case of $p=1$ via Proposition~\ref{prp:remchange}.

\paragraph*{Cycles.}
The same tractability result holds for the $\modcycle$ problem. In fact, tractability
even holds on
directed graphs as we have seen (Proposition~\ref{prp:cycledirected2}); but it can be shown
to hold with
an easy proof in the case of undirected graphs:

\begin{proposition}[mentioned in \cite{thomassen1985even}]
  The problems $\modcycle_{0,2}$ and $\modcycle_{1,2}$ are in PTIME.
\end{proposition}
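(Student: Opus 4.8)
The plan is to handle the two remainders separately, since each corresponds to a classical structural characterization. For $\modcycle_{1,2}$, i.e.\ the existence of a simple odd cycle, I would use the standard fact that a graph contains an odd cycle if and only if it is not bipartite. Testing bipartiteness is straightforward in linear time: run a BFS or DFS, $2$-color the vertices, and report that an odd cycle exists precisely when a coloring conflict (a monochromatic edge) is encountered. The only point to verify carefully is that such a conflict yields a genuine \emph{simple} odd cycle; this follows because the two tree paths reaching the conflicting endpoints, together with the conflicting edge, form a closed walk of odd length, from which a simple odd cycle can be extracted.

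For $\modcycle_{0,2}$, i.e.\ the existence of a simple even cycle, I would instead rely on a structural characterization of the graphs that have no even cycle. First I would reduce to the $2$-connected components (blocks): every simple cycle lies entirely within a single block, so $G$ contains an even cycle if and only if some block does. The key claim is then that a $2$-connected graph either is a single cycle or contains an even cycle. To establish it, I would invoke the fact that a $2$-connected graph which is not a cycle contains a theta subgraph, i.e.\ two vertices joined by three internally vertex-disjoint paths, say of lengths $a$, $b$, $c$. These paths form three cycles of lengths $a+b$, $b+c$, $a+c$, whose sum $2(a+b+c)$ is even, so the three lengths cannot all be odd; hence at least one of these cycles is even. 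Consequently a $2$-connected graph with no even cycle must be a single cycle, and that cycle must itself be odd.

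This yields the algorithm: compute the blocks of $G$ in linear time, and for each block test whether it is a single cycle (equivalently, whether every vertex of the block has degree $2$, or whether the block has exactly as many edges as vertices). If some block is not a single cycle, it contains a theta subgraph and hence an even cycle, so we answer yes; if some block is itself an even cycle, we also answer yes; otherwise every block is a bridge or an odd cycle, and we answer no. I expect the main obstacle to be establishing the theta-subgraph claim cleanly, namely that a $2$-connected non-cycle graph genuinely contains three internally disjoint paths between some pair of vertices, as this is the one nontrivial structural ingredient; the parity computation and the block decomposition are then routine.
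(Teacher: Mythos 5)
Your proposal is correct and follows essentially the same route as the paper: odd cycles via bipartiteness testing, and even cycles via the block decomposition with the characterization that an even-cycle-free graph has every biconnected component equal to an odd cycle or a single edge. The only difference is that you additionally justify the even-cycle characterization (via the theta-subgraph and parity argument), which the paper states without proof; that justification is sound.
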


\begin{proof}
For undirected graphs and modulo two, an undirected graph has an odd cycle
unless it is bipartite (which can be checked in PTIME), and it has an even cycle
unless every biconnected component is an odd cycle or a single edge (which can
be checked in PTIME).
\end{proof}

\section{Cycles when the remainder is zero}
\label{sec:cyclezero}

For the problem $\modcycle$, when the requested remainder is 0,
then the problem is known to be tractable:

\begin{proposition}[\cite{thomassen1988presence}]
  \label{prp:cyclezero}
  For any $q>0$, the problem $\modcycle_{0,q}$ is in PTIME.
\end{proposition}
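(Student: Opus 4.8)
The plan is a win--win dichotomy driven by treewidth. The combinatorial heart, which I attribute to Thomassen's argument, is the structural claim that there is a constant $w(q)$, depending only on $q$, such that every graph of treewidth exceeding $w(q)$ already \emph{contains} a simple cycle of length $0 \bmod q$; equivalently, every graph with no such cycle has bounded treewidth. As a sanity check, for $q=2$ this specializes to the familiar fact that a graph with no even cycle has every block equal to an edge or an odd cycle, hence treewidth at most $2$; and a large $n \times n$ grid, which has treewidth $\Theta(n)$, realizes every even cycle length in a long interval and so certainly contains a cycle of length $0 \bmod q$ once $n$ is large. Granting this structural claim, the algorithm is immediate: given $G$, run a constant-factor treewidth approximation; if it reports treewidth larger than $w(q)$ then we answer ``yes'' by the structural claim, and otherwise it returns a tree decomposition of width $O(w(q))$.

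The bounded-treewidth case is then routine. I would express the target property in counting monadic second-order logic over the edge set, $\mathrm{CMSO}_2$: a simple cycle is a nonempty edge set $F$ inducing a connected subgraph in which every incident vertex meets exactly two edges of $F$ (connectivity and the degree condition are $\mathrm{MSO}_2$ and first-order respectively), and the length constraint $|F| \equiv 0 \pmod q$ is exactly the modular cardinality predicate that $\mathrm{CMSO}$ adds to $\mathrm{MSO}$. Since $q$ is fixed, this is a single fixed sentence, so Courcelle's theorem (in its counting variant) decides it in linear time on graphs of treewidth $O(w(q))$. Combined with the dichotomy above, this places $\modcycle_{0,q}$ in polynomial (indeed essentially linear) time.

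The main obstacle is the structural claim itself. The difficulty is that the length of a cycle modulo $q$ is \emph{not} preserved under minor operations, so one cannot simply pass to a grid or wall minor and read off a cycle of the right length: contracting or suppressing paths changes lengths in an uncontrolled way. Large treewidth does yield a large wall as a topological minor, that is, a subdivided wall appearing as a subgraph, and the real work is to exploit the many internally disjoint paths and rerouting options of such a subdivided wall to produce, by careful bookkeeping of path lengths modulo $q$, a single simple cycle whose total length is $0 \bmod q$. Two disjoint paths between a fixed pair of vertices already give a one-parameter family of cycle lengths, but hitting the residue $0$ exactly requires combining several such reroutings and is where Thomassen's more delicate combinatorial analysis is needed; everything else in the proof is standard machinery.
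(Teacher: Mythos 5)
Your proposal is correct and takes essentially the same approach as the paper: a treewidth win--win dichotomy, where the high-treewidth case is settled by Thomassen's structural result (a large wall appears as a topological minor with all its subdivided edges of length $0 \bmod q$, hence a cycle of length $0 \bmod q$ exists) and the low-treewidth case is solved by standard bounded-treewidth algorithmics. The only cosmetic difference is that you invoke Courcelle's theorem for $\mathrm{CMSO}_2$ where the paper uses explicit dynamic programming over a tree decomposition (Section~\ref{sec:treewidth}); these are interchangeable.
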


This is because any large-treewidth
graph must contain such a cycle. Specifically, \cite[Proposition
3.2]{thomassen1988presence} shows that any high-treewidth graph contains as
topological minor a wall graph where all edges are $0 \bmod q$.
Thus, the answer is
yes on high-treewidth graphs, and on low-treewidth graphs the problems are
always tractable (see Section~\ref{sec:treewidth}).


\section{Bounded-treewidth graphs}
\label{sec:treewidth}

Under the assumption that the graphs have bounded treewidth, then the problem
$\modpath$ (hence, $\modcycle$ by Proposition~\ref{prp:cycle2path}) is always in
PTIME:

\begin{proposition}[Theorem~5.2, \cite{thomassen1988presence}]
  Let $p$ and $q$ and $k$ be arbitrary integers. The problem $\modpath_{p,q}$ is
  in PTIME if we assume that the input graphs have treewidth at most~$k$.
\end{proposition}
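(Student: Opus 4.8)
The plan is to use standard treewidth algorithmics: dynamic programming over a tree decomposition of width at most $k$. The statement fixes $p$, $q$, and $k$ as constants, so I would first invoke the well-known fact (Bodlaender's theorem) that a tree decomposition of width $k$ can be computed in linear time when $k$ is fixed, and then convert it into a \emph{nice} tree decomposition (with introduce, forget, join, and leaf nodes) of linear size in polynomial time. All subsequent work happens on this nice decomposition.

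First I would set up the DP table. The key difficulty in path problems on tree decompositions is that a simple $st$-path is a connected subgraph in which every internal vertex has degree exactly $2$ and the endpoints $s,t$ have degree $1$; the global object can enter and leave a bag's separator many times, so at each bag we must track a \emph{partial matching} on the bag vertices recording which of them are currently path-endpoints that still need to be connected to each other outside the bag, together with the degree (0, 1, or 2) already used by each bag vertex. This is the classical ``matchings in the bag'' technique used for Hamiltonian-path-style DP. The new ingredient here is the modular length constraint: I would augment each DP state with an element of $\mathbb{Z}/q\mathbb{Z}$ recording the total number of edges used so far modulo~$q$. Since $q$ is a constant, this only multiplies the number of states by $q$.

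So the DP state at a node $x$ with bag $B_x$ would be a triple consisting of: (i) for each vertex of $B_x$, a degree label in $\{0,1,2\}$; (ii) a pairing (partial matching) on those bag vertices labelled as current endpoints, indicating which pairs are connected by an already-constructed path fragment in the subgraph below $x$; and (iii) a residue $r \in \{0,\dots,q-1\}$ giving the length of the constructed fragments modulo $q$. The value is a boolean: whether such a collection of vertex-disjoint path fragments exists in the subgraph induced below $x$. I would then give the standard transition rules for each node type — leaf (trivial), introduce vertex (the new vertex starts with degree $0$), introduce/forget edge (optionally add an edge, incrementing the residue by $1$ and the degrees of both endpoints, subject to the degree cap and consistency of the matching), forget vertex (only allow forgetting a vertex whose degree is exactly $2$, or exactly $1$ if it is $s$ or $t$, i.e.\ a vertex that is ``finished''), and join (combine two children by summing degrees, requiring the cap $\le 2$ is respected, taking the union of the two matchings while forbidding cycles, and adding the residues modulo $q$). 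At the root I would read off acceptance: a fragment collection that consists of a single $s$--$t$ path of residue $p$. One must be careful in the matching bookkeeping to forbid prematurely closing a cycle and to ensure exactly one connected component survives covering $s$ and $t$; I would handle this by tracking whether $s$ and $t$ have already been introduced and forgotten and enforcing that they end with degree~$1$.

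The main obstacle I expect is not the modular bookkeeping (which is cheap since $q$ is constant) but the \emph{correctness of the matching/connectivity tracking}, ensuring that at the end we have obtained a single simple path rather than a disjoint union of paths or an accidental cycle. The subtle points are the join node (merging two partial matchings must not create a cycle, which corresponds to forbidding the matching-union from closing a loop) and the forget node (a vertex may only be forgotten once its incident path structure within the bag is final). These are exactly the invariants handled in textbook Hamiltonicity DPs, so I would adapt that machinery; the number of states per bag is at most $3^{k+1}$ (degrees) times the number of matchings on $\le k+1$ elements (bounded by $(k+1)^{k+1}$, a constant) times $q$, so for fixed $k,q$ each bag has constantly many states and the whole DP runs in polynomial (indeed linear) time.
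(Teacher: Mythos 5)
Your proposal is correct and follows essentially the same approach as the paper, which only sketches the argument: dynamic programming over a tree decomposition whose states record, per bag, how disjoint path fragments attach to the bag (your degree labels plus partial matching are a concrete encoding of the paper's ``sets of disjoint pairs of endpoints'') together with lengths modulo~$q$. Your write-up simply fills in the standard details (nice decompositions, Bodlaender's theorem, the join/forget invariants) that the paper leaves implicit.
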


\begin{proof}
  One intuition is that we can process the graph along a tree decomposition
  and solve the problem by dynamic programming. Intuitively, we can
  remember, for each bag, for each set of disjoint pairs of endpoints and modulo
  lengths, which such sets are achievable simultaneously by disjoint paths in
  the subgraph induced by the nodes occurring below that bag in the tree
  decomposition.
\end{proof}

\section{Constraints on degree and connectivity}

In this section, we review some combinatorial results which study which
conditions on the graph can guarantee the existence of cycles whose lengths
achieve some prescribed remainder values. The conditions studied are minimal
degree, average degree (or, equivalently, edge density), and connectivity.

\paragraph*{Minimal degree.}
There are results showing that, when graphs are asserted to have sufficiently
high minimal degree, then it is impossible to avoid some cycle lengths.
Specifically, for graphs of
sufficiently large ($O(q)$) minimal degree, then there must by cycles of all
even lengths modulo~$q$, and if the graph is 2-connected and not bipartite then
there must be cycles of all lengths modulo~$q$~\cite{gao2022unified}.
We note that the main result of~\cite{gao2022unified} is actually a result about
path lengths on graphs with sufficiently high minimal degree, so this also gives
results on the $\modpath$ problem in this setting.

\paragraph*{Average degree.}
There are also existence results based on
edge density~\cite{bollobas1977cycles} (or equivalently on average degree).
Specifically, for any odd modulo $q$,
considering graphs with a sufficiently high number of edges (i.e., at least $c_q n$ 
some constant $q$, where $n$ is the number of vertices), then such graphs must contain
a cycle of length $p$ mod $q$ for every~$p$. (Of course the same cannot be true
for even~$q$, e.g., considering bipartite graphs.)
A similar result holds for
all even remainders, i.e., cycles of length $2p$ modulo $q$ for
arbitrary~$q$~\cite{verstraete2000arithmetic}.

\paragraph*{Connectivity.}
There are also existence results based on connectivity. 
It is known that, for all $q \geq 3$, every $q$-connected graph contains a cycle
of length zero modulo~$q$~\cite{gao2022unified}; note that the same is true
assuming that the treewidth is sufficiently high (Section~\ref{sec:cyclezero}).
Every $q$-connected graph must also contain cycles of all even lengths
modulo~$q$ provided that $q \geq 6$~\cite[Theorem~5.16]{gao2022unified}.
Other
results are known about the existence of $k$-linkages with modulo conditions
assuming sufficiently high connectivity~\cite{chen2009modulo}.
Last, it is shown in \cite{lyngsie2021cycle} that for \emph{odd} moduli $q$, any
sufficiently large 3-connected cubic graph contains cycles with each possible
length modulo $q$.

\paragraph*{Directed graphs.}
Some combinatorial results are also known for directed graphs.
It is known that strongly connected directed graphs of sufficiently high edge
density must contain an even cycle~\cite{chung1994even}.
Lower bounds on degree also imply the
existence of cycles with remainder 0 in the case of directed
graphs~\cite{alon1989cycles}, and there are similar bounds on the dichromatic
number~\cite{steiner2022subdivisions}.

\section{Multiple paths or multiple cycles}

One natural generalization of $\modpath$ and $\modcycle$ is to ask for the
existence of $k$ disjoint paths or $k$ disjoint cycles satisfying the
conditions. These problems have been studied when $k$ is given as input: in this
case
the problem is NP-hard for paths (as a special
case of discrete multicommodity flow~\cite{karp1975computational}) and
for cycles (already if we want to partition a graph on $3n$ vertices into $n$
vertex-disjoint triangles; see
\cite{kirkpatrick1978completeness} or \cite{vanrooij2013partition}). They have
also been studied when
parameterizing by~$k$, e.g., \cite{bodlaender2013kernel}.
Here we assume that $k$ is a constant.

\paragraph*{Without modulo constraints.}
The results of
Robertson on Seymour imply that, on \emph{undirected graphs}, for any constant~$k$,
given source-sink pairs $(s_1, t_1), \ldots, (s_k, t_k)$,  we
can decide in PTIME whether there exist $k$ pairwise disjoint paths each of
which connects $s_i$ and $t_i$~\cite{robertson1995graph13}.
(Note that there are recent extensions
of such results to problems where we want to minimize the \emph{total length} of
such paths, in the case $k=2$~\cite{bjorklund2019shortest}.) 
It is also known that we can test in linear time whether an input undirected
graph contains $k$ vertex-disjoint cycles~\cite{bodlaender1994disjoint}, note
that this easily follows from a treewidth-based argument.

Now, on directed graphs, 
the existence of $k$ vertex-disjoint cycles can also be
tested in PTIME even on directed graphs \cite[Section 5]{reed1996packing}. By
contrast, asking for
the existence of two disjoint paths for two source-target pairs is NP-hard (see
Proposition~\ref{prp:disjoint}). Note that this last result no longer holds if the graph is required to be
\emph{planar}: in this case, the disjoint path problems for any constant $k$ can be
solved in polynomial time~\cite{schrijver1994finding}.

Also note that these results have been extended in the setting where we are
looking for $k$-tuples of paths that must be \emph{shortest paths} from~$s_i$
to~$t_i$. This task is known to be tractable on undirected
graphs~\cite{lochet2021polynomial}, and on directed planar graphs or on directed
graphs with $k=2$~\cite{berczi2017directed}.

\paragraph*{With modulo constraints.}
The results on $k$ disjoint paths and cycles on undirected graphs have been extended
to test the existence of some
constant number of disjoint cycles or paths of prescribed modulo values. It is known that
on undirected graphs you can test in PTIME for the
presence of a constant number of disjoint cycles of
length divisible by some~$q$~\cite[Theorem~5.1]{thomassen1988presence}; note
that this follows from the proof of
Proposition~\ref{prp:cyclezero}, as the answer is always yes on graphs of
sufficiently high treewidth.
It was shown in~\cite{kawarabayashi2010odd} that you can test in
PTIME for the existence of $k$ vertex-disjoint odd cycles in undirected graphs.
You can also test in PTIME for the existence of $k$ vertex-disjoint
paths connecting $k$ source-sink pairs with prescribed parities~\cite{kawarabayashi2011graph}.

Very recently~\cite{kawarabayashi2023half}, a tractability result for cycles
with parity constraints was shown on directed graphs: you can test in PTIME
on an input directed graph whether it contains 
$k$ vertex-disjoint odd cycles.

\section{Other related work}

\paragraph*{Group-labeled graphs.}
Another model is that of group-labeled graphs, for which there is a 
\emph{directed} and \emph{undirected} setting. 
The setting of \emph{directed group-labeled graphs}~\cite{kawase2020finding},
considers \emph{directed} graphs
where edges are labeled by an element of an abelian group: the value of a
directed path is the composition of the labels of the edges traversed by the
path. However, edges can then also be traversed in a reverse direction, in which
case we compose by the inverse of their label.

Hence, except in cases like $q =
2$, the model of directed group-labeled graphs does not seem well-adapted to code the $\modpath$ or $\modcycle$
problems.
Note that the undirected and directed settings are equivalent where all nonzero elements
of the group have order two~\cite{thomas2023packing}.

The setting of \emph{undirected group-labeled graphs}~\cite{wollan2010packing,wollan2011packing} is
closer to our problem. In this setting, the model considers \emph{undirected}
graphs with each edge again labeled by an element of an abelian group, and with
the value of a path being the combination of the edge labels.

In the setting of undirected group-labeled graphs, it
was recently shown~\cite{thomas2023packing} that, for any prime power~$q$, the
cycles of length $p \bmod q$ satisfy the Erdős-Pósa property for all $p$.
There are other similar results in this model on the Erdős-Pósa property for paths
with prescribed endpoints and modulo values.
There are other earlier results on
cycles~\cite{gollin2021unified,gollin2022unified}.
However, this does not seem to imply any result on the complexity of detecting
whether such cycles or paths are present in an input graph. 

\paragraph*{Robertson-Seymour with parity conditions.}
There is work aiming at generalizing Robertson-Seymour results with parity
conditions~\cite{kawarabayashi2011graph,kawarabayashi2013totally}.
However, these do not seem to have been extended to moduli greater than~$2$,
and our problems are known to be tractable for $q=2$
(Section~\ref{sec:allsame}).

\paragraph*{Graphs with large clique minors.}
It is known that graphs with large clique minors must contain certain subgraphs
with edges interpreted as multiples of some
value~\cite{alon2021divisible,das2021tight}.

\paragraph*{Expanding graphs.}
It is known that expanders, aka expanding
graphs, must contain cycles of all moduli \cite{martinsson2023cycle}.

\paragraph*{Planar graphs.}
It is known that on cubic, 3-connected, planar graphs, between any two
vertices of an undirected graph there must be
paths of all remainders modulo $q=3$, and such paths can be found in polynomial
time~\cite{deng1991path}.

\bibliographystyle{alpha}
\bibliography{main}

\newcommand{\etalchar}[1]{$^{#1}$}
\begin{thebibliography}{vRvKNB13}

\bibitem[AK21]{alon2021divisible}
Noga Alon and Michael Krivelevich.
\newblock Divisible subdivisions.
\newblock {\em Journal of Graph Theory}, 98(4), 2021.

\bibitem[AL89]{alon1989cycles}
Noga Alon and Nathan Linial.
\newblock Cycles of length 0 modulo $k$ in directed graphs.
\newblock {\em Journal of Combinatorial Theory, Series B}, 47(1), 1989.

\bibitem[APY91]{arkin1991modularity}
Esther~M Arkin, Christos~H Papadimitriou, and Mihalis Yannakakis.
\newblock Modularity of cycles and paths in graphs.
\newblock {\em JACM}, 38(2), 1991.

\bibitem[BH19]{bjorklund2019shortest}
Andreas Bj\"orklund and Thore Husfeldt.
\newblock Shortest two disjoint paths in polynomial time.
\newblock {\em SIAM Journal on Computing}, 48(6), 2019.

\bibitem[BHK22]{bjorklund2022shortest}
Andreas Bj{\"o}rklund, Thore Husfeldt, and Petteri Kaski.
\newblock The shortest even cycle problem is tractable.
\newblock In {\em STOC}, 2022.

\bibitem[BJK13]{bodlaender2013kernel}
Hans~L Bodlaender, Bart~MP Jansen, and Stefan Kratsch.
\newblock Kernel bounds for path and cycle problems.
\newblock {\em Theoretical Computer Science}, 511, 2013.

\bibitem[BK17]{berczi2017directed}
Krist{\'o}f B{\'e}rczi and Yusuke Kobayashi.
\newblock The directed disjoint shortest paths problem.
\newblock In {\em ESA}, 2017.

\bibitem[Bod94]{bodlaender1994disjoint}
Hans~L Bodlaender.
\newblock On disjoint cycles.
\newblock {\em International Journal of Foundations of Computer Science},
  5(01), 1994.

\bibitem[Bol77]{bollobas1977cycles}
Bela Bollob{\'a}s.
\newblock Cycles modulo $k$.
\newblock {\em Bulletin of the London Mathematical Society}, 9(1), 1977.

\bibitem[Cad22]{cstheory}
Caduk:\url{https://math.stackexchange.com/users/966819/caduk}.
\newblock Algorithm for finding shortest directed odd cycle in a digraph.
\newblock Mathematics Stack Exchange, 2022.
\newblock URL:\url{https://math.stackexchange.com/q/4436683} (version:
  2022-04-26).

\bibitem[CGK94]{chung1994even}
Fan~RK Chung, Wayne Goddard, and Daniel~J Kleitman.
\newblock Even cycles in directed graphs.
\newblock {\em SIAM Journal on Discrete Mathematics}, 7(3), 1994.

\bibitem[CMZ09]{chen2009modulo}
Yuan Chen, Yao Mao, and Qunjiao Zhang.
\newblock On modulo linked graphs.
\newblock In {\em FAW}, 2009.

\bibitem[DDS21]{das2021tight}
Shagnik Das, Nemanja Dragani{\'c}, and Raphael Steiner.
\newblock Tight bounds for divisible subdivisions.
\newblock {\em arXiv preprint arXiv:2111.05723}, 2021.

\bibitem[DP91]{deng1991path}
Xiaotie Deng and Christos~H Papadimitriou.
\newblock On path lengths modulo three.
\newblock {\em Journal of graph theory}, 15(3), 1991.

\bibitem[FHW80]{fortune1980directed}
Steven Fortune, John Hopcroft, and James Wyllie.
\newblock The directed subgraph homeomorphism problem.
\newblock {\em Theoretical Computer Science}, 10(2), 1980.

\bibitem[GHK{\etalchar{+}}21]{gollin2021unified}
J~Pascal Gollin, Kevin Hendrey, Ken-ichi Kawarabayashi, O~Kwon, Sang-il Oum,
  et~al.
\newblock A unified half-integral {E}rd{\H{o}}s-{P}{\'{o}}sa theorem for cycles
  in graphs labelled by multiple abelian groups.
\newblock {\em arXiv preprint arXiv:2102.01986}, 2021.

\bibitem[GHK{\etalchar{+}}22]{gollin2022unified}
J~Pascal Gollin, Kevin Hendrey, O-j Kwon, Sang-il Oum, Youngho Yoo, et~al.
\newblock A unified {E}rd{\H{o}}s-{P}{\'{o}}sa theorem for cycles in graphs
  labelled by multiple abelian groups.
\newblock {\em arXiv preprint arXiv:2209.09488}, 2022.

\bibitem[GHLM22]{gao2022unified}
Jun Gao, Qingyi Huo, Chun-Hung Liu, and Jie Ma.
\newblock A unified proof of conjectures on cycle lengths in graphs.
\newblock {\em International Mathematics Research Notices}, 2022(10), 2022.

\bibitem[HST04]{hemaspaandra2004complexity}
Edith Hemaspaandra, Holger Spakowski, and Mayur Thakur.
\newblock Complexity of cycle length modularity problems in graphs.
\newblock In {\em LATIN}, 2004.

\bibitem[Kar75]{karp1975computational}
Richard~M Karp.
\newblock On the computational complexity of combinatorial problems.
\newblock {\em Networks}, 5(1), 1975.

\bibitem[Kaw13]{kawarabayashi2013totally}
Ken-ichi Kawarabayashi.
\newblock Totally odd subdivisions and parity subdivisions: Structures and
  coloring.
\newblock In {\em SODA}, 2013.

\bibitem[KH78]{kirkpatrick1978completeness}
David~G Kirkpatrick and Pavol Hell.
\newblock On the completeness of a generalized matching problem.
\newblock In {\em STOC}, 1978.

\bibitem[KKKX23]{kawarabayashi2023half}
Ken-ichi Kawarabayashi, Stephan Kreutzer, O-joung Kwon, and Qiqin Xie.
\newblock A half-integral erd{\H{o}}s-p{\'o}sa theorem for directed odd cycles.
\newblock In {\em SODA}, 2023.

\bibitem[KKY20]{kawase2020finding}
Yasushi Kawase, Yusuke Kobayashi, and Yutaro Yamaguchi.
\newblock Finding a path with two labels forbidden in group-labeled graphs.
\newblock {\em Journal of Combinatorial Theory, Series B}, 143, 2020.

\bibitem[KR10]{kawarabayashi2010odd}
Ken-ichi Kawarabayashi and Bruce Reed.
\newblock Odd cycle packing.
\newblock In {\em STOC}, 2010.

\bibitem[KRW11]{kawarabayashi2011graph}
Ken-ichi Kawarabayashi, Bruce Reed, and Paul Wollan.
\newblock The graph minor algorithm with parity conditions.
\newblock In {\em FOCS}, 2011.

\bibitem[LM21]{lyngsie2021cycle}
Kasper~Szabo Lyngsie and Martin Merker.
\newblock Cycle lengths modulo $k$ in large 3-connected cubic graphs.
\newblock {\em Advances in Combinatorics}, 2021.

\bibitem[Loc21]{lochet2021polynomial}
Willian Lochet.
\newblock A polynomial time algorithm for the k-disjoint shortest paths
  problem.
\newblock In {\em SODA}, 2021.

\bibitem[LP84]{lapaugh1984even}
Andrea~S LaPaugh and Christos~H Papadimitriou.
\newblock The even-path problem for graphs and digraphs.
\newblock {\em Networks}, 14(4), 1984.

\bibitem[McC04]{mccuaig2004polyas}
William McCuaig.
\newblock P{\'o}lya's permanent problem.
\newblock {\em The Electronic Journal of Combinatorics}, 2004.

\bibitem[MP22]{martens2022complexity}
Wim Martens and Tina Popp.
\newblock The complexity of regular trail and simple path queries on undirected
  graphs.
\newblock In {\em PODS}, 2022.

\bibitem[MS23]{martinsson2023cycle}
Anders Martinsson and Raphael Steiner.
\newblock Cycle lengths modulo $k$ in expanders.
\newblock {\em European Journal of Combinatorics}, 109, 2023.

\bibitem[RRST96]{reed1996packing}
Bruce Reed, Neil Robertson, Paul Seymour, and Robin Thomas.
\newblock Packing directed circuits.
\newblock {\em Combinatorica}, 16(4), 1996.

\bibitem[RS95]{robertson1995graph13}
Neil Robertson and Paul~D Seymour.
\newblock Graph minors. {XIII}. {T}he disjoint paths problem.
\newblock {\em Journal of combinatorial theory, Series B}, 63(1), 1995.

\bibitem[RST99]{robertson1999permanents}
Neil Robertson, Paul~D Seymour, and Robin Thomas.
\newblock Permanents, pfaffian orientations, and even directed circuits.
\newblock {\em Annals of mathematics}, 1999.

\bibitem[Sch94]{schrijver1994finding}
Alexander Schrijver.
\newblock Finding $k$ disjoint paths in a directed planar graph.
\newblock {\em SIAM Journal on Computing}, 23(4), 1994.

\bibitem[Ste22]{steiner2022subdivisions}
Raphael Steiner.
\newblock Subdivisions with congruence constraints in digraphs of large
  chromatic number.
\newblock {\em arXiv preprint arXiv:2208.06358}, 2022.

\bibitem[Tho83]{thomassen1983graph}
Carsten Thomassen.
\newblock Graph decomposition with applications to subdivisions and path
  systems modulo $k$.
\newblock {\em Journal of Graph Theory}, 7(2), 1983.

\bibitem[Tho85]{thomassen1985even}
Carsten Thomassen.
\newblock Even cycles in directed graphs.
\newblock {\em European Journal of Combinatorics}, 6(1), 1985.

\bibitem[Tho88]{thomassen1988presence}
Carsten Thomassen.
\newblock On the presence of disjoint subgraphs of a specified type.
\newblock {\em Journal of Graph Theory}, 12(1), 1988.

\bibitem[TY23]{thomas2023packing}
Robin Thomas and Youngho Yoo.
\newblock Packing {A}-paths of length zero modulo a prime.
\newblock {\em Journal of Combinatorial Theory, Series B}, 160, 2023.

\bibitem[Ver00]{verstraete2000arithmetic}
Jacques Verstra{\"e}te.
\newblock On arithmetic progressions of cycle lengths in graphs.
\newblock {\em Combinatorics, Probability and Computing}, 9(4), 2000.

\bibitem[vRvKNB13]{vanrooij2013partition}
Johan~MM van Rooij, Marcel~E van Kooten~Niekerk, and Hans~L Bodlaender.
\newblock Partition into triangles on bounded degree graphs.
\newblock {\em Theory of Computing Systems}, 52(4), 2013.

\bibitem[Wol10]{wollan2010packing}
Paul Wollan.
\newblock Packing non-zero {A}-paths in an undirected model of group labeled
  graphs.
\newblock {\em Journal of Combinatorial Theory, Series B}, 100(2), 2010.

\bibitem[Wol11]{wollan2011packing}
Paul Wollan.
\newblock Packing cycles with modularity constraints.
\newblock {\em Combinatorica}, 31(1), 2011.

\end{thebibliography}

\vfill
\doclicenseThis

\end{document}